\newtheorem{example}{Example}
\newtheorem{theorem}{Theorem}
\newtheorem{remark}{Remark}
\newtheorem{corollary}{Corollary}
\DeclareMathOperator{\pa}{pa}
\DeclareMathOperator{\head}{head}
\DeclareMathOperator{\Do}{do}
\DeclareMathOperator{\body}{body}
\DeclareMathOperator{\effect}{effect}
\DeclareMathOperator{\causes}{causes}
\DeclareMathOperator{\graph}{Graph}
\DeclareMathOperator{\Graph}{Graph}
\DeclareMathOperator{\FCM}{FCM}
\DeclareMathOperator{\Ind}{Ind}
\DeclareMathOperator{\LP}{LP}
\DeclareMathOperator{\Facts}{Facts}
\title{``Would life be more interesting if I were in AI?'' \\ 
Answering Counterfactuals based on \\ Probabilistic Inductive Logic Programming}
\author{Kilian Rückschlo\ss
\institute{ Ludwig-Maximilians-Universität München \\
			Oettingenstra\ss{e} 67, 80538 München, Germany\\
         \email{kilian.rueckschloss@lmu.de}}
\and
Felix Weitkämper
\institute{ Ludwig-Maximilians-Universität München \\
			Oettingenstra\ss{e} 67, 80538 München, Germany\\
         \email{felix.weitkaemper@lmu.de}}
}    
\newcommand{\titlerunning}{Answering Probabilistic Counterfactuals based on ILP}
\newcommand{\authorrunning}{K.~Rückschlo\ss, F.~Weitkämper}
\begin{document}
\maketitle

\begin{abstract}
Probabilistic logic programs are logic programs where some facts hold with a specified probability. Here, we investigate these programs with a causal framework that allows counterfactual queries. Learning the program structure from observational data is usually done through heuristic search relying on statistical tests. However, these statistical tests lack information about the causal mechanism generating the data, which makes it unfeasible to use the resulting programs for counterfactual reasoning. To address this, we propose a language fragment that allows reconstructing a program from its induced distribution. This further enables us to learn programs supporting counterfactual queries.
\end{abstract}

\section{Introduction}
While only observing the world, humans are used to drawing counterfactual conclusions, i.e.~they reason about how events would have unfolded under different circumstances. This leads us to judgements like: ``I would have published more papers, if I were in AI.'' without actually experiencing the alternative reality in which we work in AI. Note that this capability allows us to make sense of the past, to plan courses of actions, to make emotional and social judgments as well as to adapt our behaviour \cite{CounterfactualIntroduction}.  Hence, in artificial intelligence one also wants to infer a model of the world that supports counterfactual reasoning.

Currently, the \textsc{WhatIf}-solver \cite{whatif} establishes a counterfactual reasoning for ProbLog programs \cite{Problog}, i.e.~logic programs in which each clause holds with a specified probability. However, is the counterfactual reasoning provided by a ProbLog program uniquely determined by its distribution semantics \cite{DistributionSemantics}?

Assume for instance that a patient is treated, denoted $treatment$, with a probability of $0.5$. If we treat a patient, we expect him to recover, denoted $recovery$, with a probability of $0.7$, otherwise he recovers with a probability of $0.5$. The resulting  distribution can be encoded with the following two programs~$\textbf{P}_{1/2}$.
\begin{align*}
& \textbf{P}_1:			   		  
&& 0.5 :: treatment 
&& 0.5 ::recovery  
&& 0.4 ::recovery \leftarrow treatment \\										    
&  \textbf{P}_2: 
&& 0.5::treatment
&& 0.5::recovery \leftarrow \neg treatment 
&& 0.7::recovery \leftarrow treatment
\end{align*}

Assume further that the patient recovers while he has not been treated. What is the probability that he would have recovered under treatment?

In both programs $\textbf{P}_{1/2}$, we conclude from our observations that the patient recovers because of the second clause, i.e.~we conclude that the second clause holds in the world we observe. If we had additionally treated the patient, under program $\textbf{P}_1$, he would still have recovered as the second clause in $\textbf{P}_1$ is still applicable under treatment. Hence, we obtain a probability of one for the patient to recover under treatment.  Whereas, in program $\textbf{P}_2$, the second clause is not applicable under treatment. Hence, in this case, if we had treated the patient here, he can only recover because of the third clause and we obtain a probability of $0.7$ for the patient to recover under treatment. 

As we see, in general, the classical distribution semantics \cite{DistributionSemantics} does not uniquely determine the outcome of a counterfactual query. Further, note that ProbLog programs are usually learned from observations sampled from a distribution of interest.  Hence, even if we assume perfect learning, we can only ensure to obtain a program representing the correct distribution. In particular,~a structure learning algorithm is not able to distinguish the programs $\textbf{P}_{1/2}$, i.e. we cannot ensure that a learned program answers counterfactual queries correctly.

In this contribution, we present a fragment in which each program is uniquely determined by its class dependency graph and the corresponding distribution. We further argue that this yields a setting for the available structure learning methods which supports counterfactual reasoning.      
  
\section{Foundations} \label{section - preliminaries}

Here, we introduce ProbLog programs \cite{Problog} and we recall how counterfactual queries are processed on them \cite{whatif}. Finally, we quickly explain the design of the currently available structure learning algorithms for these programs.

As the semantics of non-ground ProbLog programs is usually defined by grounding, we restrict ourselves to the propositional case. Hence, we construct our programs from a \textbf{propositional alphabet} that is given by a finite set of propositions $\mathfrak{P}$ together with a subset $\mathfrak{E}(\mathfrak{P}) \subseteq \mathfrak{P}$ of \textbf{external proposition}. In this context, we call $\mathfrak{I}(\mathfrak{P}) := \mathfrak{P} \setminus \mathfrak{E}(\mathfrak{P})$ the set of \textbf{internal propositions}.

A \textbf{literal} $l$ is an expression $p$ or $\neg p$ for a proposition $p \in \mathfrak{P}$. We call~$l$ a \textbf{positive} literal if it is of the form $p$ and a \textbf{negative} literal if it is of the form~$\neg p$. Further, we call $l$ an \textbf{external} or \textbf{internal} literal if~${p \in \mathfrak{E}(\mathfrak{P})}$ or $p \in \mathfrak{I}(\mathfrak{P})$ respectively.
A \textbf{(logical) clause} $LC$ is an expression $h \leftarrow b_1,...,b_n$ where~${h \in \mathfrak{I}(\mathfrak{P})}$ is an internal proposition called the \textbf{head} and where $\{ b_1,...,b_n \}$ is a finite set of literals called the \textbf{body} of $LC$.
Finally, a \textbf{random fact} $RF$ is an expression $\pi(RF) :: u(RF)$ where $\pi(RF) \in [0,1]$ is the \textbf{probability} and where $u(RF) \in \mathfrak{E} (\mathfrak{P})$ is an external proposition called the \textbf{proposition} of $RF$.

\begin{example}
Consider the alphabet~$\mathfrak{P} := \{ treatment,~recovery, u_1,u_2,u_3 \}$ with external literals $\mathfrak{E}(\mathfrak{P})$ given by $\{u_1,u_2,u_3\}$.
We have that $treatment$ is a positive literal, whereas $\neg treatment$ is a negative literal. Further, $recovery \leftarrow treatment, u_3$ is a clause and $0.4 :: u_3$ is a random fact.
\label{example - expressions}
\end{example}

Now a \textbf{logic program} is a finite set of clauses and a \textbf{ProbLog program} $\textbf{P}$ is given by a logic program~$\LP (\textbf{P})$ and a set~$\Facts (\textbf{P})$ consisting of a unique random fact for every external proposition. In this case, we call~$\LP (\textbf{P})$ the \textbf{underlying logic program} of $\textbf{P}$. Finally, we define ($\mathfrak{P}$-)formulas $\phi$ and ($\mathfrak{P}$-)structures ${\mathcal{M}: \mathfrak{P} \rightarrow \{ True, False \}}$ as usual in propositional logic. Whether a given $\mathfrak{P}$-structure ${\mathcal{M}}$ satisfies a formula $\phi$, written $\mathcal{M} \models \phi$, is also defined as usual in propositional logic.

\begin{example}
In the alphabet $\mathfrak{P}$ of Example \ref{example - expressions} we can write the following ProbLog program.
\begin{align*}
& 0.5 :: u_1 && 0.5 :: u_2 && 0.4 :: u_3 
&& treatment \leftarrow u_1 && recovery \leftarrow u_2 && recovery \leftarrow treatment, u_3
\end{align*}
\label{example - program}
\end{example}

As the semantics of ProbLog programs we choose the FCM-semantics \cite{fcm-semantics}, which supports counterfactual reasoning:
For a ProbLog program $\textbf{P}$ we define the \textbf{functional causal models semantics} or~\mbox{\textbf{FCM-semantics}} to be the system of Boolean equations
$$
\FCM (\textbf{P}) :=
\left\{ 
p^{\FCM} := \bigvee_{\substack{LC \in \LP (\textbf{P}) \\ \head(LC) = p}} 
\left( 
\bigwedge_{\substack{l \in \body (LC)\\ l~\text{internal literal}}} l^{\FCM} \land 
\bigwedge_{\substack{u(RF) \in \body (LC)\\ RF \in \Facts (\textbf{P})}} u(RF)^{\FCM}
\right) 
\right\}_{p \in \mathfrak{I}(\mathfrak{P})}.
$$
Here, we find that the $u(RF)^{\FCM}$ are mutually independent Boolean random variables for every random fact~\mbox{$RF \in \Facts(\textbf{P})$}, each holding hold with probability~$\pi(RF)
$.  
\begin{example}
The FCM-semantics of the program $\textbf{P}$ in Example \ref{example - program} is given by 
\begin{align*}
& treatment^{\FCM} := u_1^{\FCM} && recovery^{\FCM} := u_2^{\FCM} \lor (treatment^{\FCM} \land u_3^{\FCM}),
\end{align*}
where $u_1^{\FCM}$, $u_2^{\FCM}$ and $u_3^{\FCM}$ are mutually independent Boolean random variables holding true with a probability of $0.5$, $0.5$ and $0.4$ respectively.
\end{example}

For the rest of this section, we fix a ProbLog program $\textbf{P}$ with an acyclic underlying logic program. Note that the FCM-semantics $\FCM (\textbf{P})$ yields a unique solution for every internal proposition~${p \in \mathfrak{I}(\mathfrak{P})}$ in terms of the mutually independent Boolean random variables $u(RF)^{\FCM}$. In this way, it defines a distribution on the $\mathfrak{P}$-structures $\mathcal{M}: \mathfrak{P} \rightarrow \{ True , False \}$ which coincides with the distribution semantics~\cite{DistributionSemantics} according to~Rückschloß and Weitkämper \cite{fcm-semantics}. Finally, we define the probability of a formula $\phi$ to hold as~${
\pi (\phi) := \sum_{\substack{\mathcal{M}~\mathfrak{P}\text{-structure} \\ \mathcal{M} \models \phi }} \pi (\mathcal{M})}.
$

However, the FCM-semantics does not only support queries about conditional and unconditional probabilities. It allows us to answer two more general causal query types, namely determining the effect of external interventions and counterfactuals \cite{fcm-semantics}. Assume for instance we want to \textbf{intervene} and set a subset $\textbf{X} \subseteq \mathfrak{I}(\mathfrak{P})$ of internal propositions to truth values specified by an assignment~$\textbf{x}$. In this case, we build a modified program~$\textbf{P}^{\Do (\textbf{x})}$ by erasing all clauses $LC \in \LP(\textbf{P})$ with head in $\textbf{X}$ and by adding a fact~${p \leftarrow}$ if $p \in \textbf{X}$ is set to true by $\textbf{x}$. If we now ask for the probability $\pi (\phi \vert \Do (\textbf{x}))$ of a formula $\phi$ to hold after setting $\textbf{X}$ to the values  $\textbf{x}$, we query the program $\textbf{P}^{\Do (\textbf{x})}$ for the probability of $\phi$.

\begin{example}
Assume we treat the patient in the program $\textbf{P}$ of Example \ref{example - program}. In this case, we obtain 
\begin{align*}
& 0.5 :: u_1 && 0.5 :: u_2 && 0.4 :: u_3 &&
& treatment \leftarrow  && recovery \leftarrow u_2 && recovery \leftarrow treatment, u_3
\end{align*}
for the modified program $\textbf{P}^{\Do(treatment)}$. This means we obtain a probability of $0.7$ for $recovery$ if we are the doctor and decide to treat our patient.  
\label{example - intervention}
\end{example}   

Further, we do not only want to either observe or intervene, we also want to know what the probability of an event would have been if we had intervened before observing some evidence. This is especially interesting in the \textbf{counterfactual} case where our evidence contradicts the given intervention.

\begin{example}
Consider the query in the introduction and observe that the evidence $\{ \neg treatment, recovery \}$ contradicts the intervention $\Do (treatment)$, i.e.~this is a counterfactual query.
\end{example} 

Hence, fix another subset of internal propositions $\textbf{E} \subseteq \mathfrak{I}(\mathfrak{P})$ and assume we observe the evidence that the propositions in $\textbf{E}$ take values according to the assignment $\textbf{e}$. We now ask for the probability~${\pi (\phi \vert \textbf{e}, \Do (\textbf{x}))}$ of the formula $\phi$ to hold if we had set the propositions in $\textbf{X}$ to the values specified by $\textbf{x}$ before observing our evidence $\textbf{e}$. To answer queries like that we proceed as Kiesel et al.~\cite{whatif}: 

First we generate two copies $\mathfrak{I}(\mathfrak{P})^{e/i}$ of the set of internal propositions -- one to handle the evidence and the other to handle the interventions. Further, we set $u^{e/i} := u$ for every external proposition~${u \in \mathfrak{E}(\mathfrak{P})}$. Note that this yields maps $\_^{e/i}$ of literals, clauses, programs etc. We define the \textbf{counterfactual semantics} of $\textbf{P}$ to be the ProbLog program $\textbf{P}^K$ which consists of the logic program ${\textbf{L}(\textbf{P})^e \cup \textbf{L}(\textbf{P})^i}$ and the random facts $\Facts(\textbf{P})$. Now we intervene in $\textbf{P}^K$ and set the proposition in $\textbf{X}^i$ to the truth values specified by $\textbf{x}$ to obtain the program~$\textbf{P}^{K,\Do(\textbf{x})}$. Finally, we query the program $\textbf{P}^{K,\Do(\textbf{x})}$ for the probability~$\pi(\phi^i \vert \textbf{E}^e = \textbf{e} )$ to obtain the desired result for $\pi (\phi \vert \textbf{e}, \Do (\textbf{x}))$.

\begin{example}
Assume we did not treat the patient in the program $\textbf{P}$ of Example~\ref{example - program} and he recovered. What is the probability $\pi (recovery \vert \neg treatment, recovery, \Do (treatment))$ that he would have recovered, if he had been treated?  To answer this question we query the program $\textbf{P}^{K, \Do(treatment)}$
\begin{align*}
& 0.5 :: u_1 && 0.5 :: u_2 && 0.4 :: u_3 \\
& treatment^i \leftarrow  && recovery^i \leftarrow u_2 && recovery^i \leftarrow treatment^i, u_3 \\
& treatment^e \leftarrow u_1  && recovery^e \leftarrow u_2 && recovery^e \leftarrow treatment^e, u_3
\end{align*}
for the probability $\pi (recovery^i \vert recovery^e, \neg treatment^e) = 1$.  
\label{example - intervention}
\end{example} 

This procedure automates Pearl's counterfactual reasoning \cite{Causality} and is implemented in the \textsc{WhatIf}-solver of Kiesel et al.~\cite{whatif}. 

In this contribution, we restrict ourselves to ProbLog programs, which can be represented with ProbLog clauses. A \textbf{ProbLog clause} $RC$ is an expression $~{\pi(RC) :: \effect(RC) \leftarrow causes(RC)}$, where ${\effect(RC) \in \mathfrak{I}(\mathfrak{P})}$ is an internal proposition called the \textbf{effect}, where $\causes(RC)$ is a finite set of internal literals called the \textbf{causes} and where $0 \leq \pi(RC) \leq 1$ is a number called the \textbf{probability} of $RC$. The ProbLog clause $RC$ is an abbreviation for the following pair of a random fact and a logical clause.
\begin{align*}
&RF(RC) := (\pi(RC) :: u(RC)) && LC(RC):= (h \leftarrow b_1,...,b_n,u(RC)),
\end{align*} 
where $u(RC) \in \mathfrak{E}(\mathfrak{P})$ is a distinct external literal. From now on, by abuse of language, a \textbf{ProbLog program} $\textbf{P}$ is a finite set of ProbLog clauses, i.e.~a ProbLog program consisting of the logic program ${\textbf{L}(\textbf{P}) := \{ LC(RC) \text{ : } RC \in \textbf{P} \}}$ and of the random facts $\Facts(\textbf{P}) := \{ RF(RC) \text{ : } RC \in \textbf{P} \}$.

\begin{example}
Observe that the program $\textbf{P}_1$ from the introduction is an abbreviation for the ProbLog program in Example \ref{example - program}. 
\end{example}

 The \textbf{class dependency graph} $\Graph (\textbf{P})$ of a ProbLog program $\textbf{P}$ is the directed graph on the internal propositions $\mathfrak{I}(\mathfrak{P})$ obtained by drawing an edge ${p_1 \rightarrow p_2}$ if and only if there exists a ProbLog clause~${RC \in \textbf{P}}$ with a cause $p_1$ or $\neg p_1$ and with effect $p_2$. We say that the program $\textbf{P}$ is \textbf{acyclic} if its class dependency graph $\Graph (\textbf{P})$ is a directed acyclic graph. Moreover, the program $\textbf{P}$ is \textbf{positive} if the causes of every ProbLog clause $RC \in \textbf{P}$ form a set of positive literals. 

\begin{example}
The class dependency graph of the programs $\textbf{P}_{1/2}$ in the introduction is given by the edge ${treatment \rightarrow recovery}$. Further, program $\textbf{P}_1$ is positive whereas $\textbf{P}_2$ is not.
\label{example - class dependency graph}
\end{example}

Next, we quickly recall the overview over the structure learning techniques for propositional ProbLog programs from Riguzzi \cite[§10]{PLP}. Given suitable data, all those algorithms search the space of programs defined by a language bias and background knowledge with heuristics relying on statistical tests. 

Here, the \textbf{language bias} defining the clause and therefore~the program space is defined by mode declarations of the form $modeb \left(*, q \right)$ and $modeh \left( * , p \right)$  as well as declarations of the form $determination (p,q)$ for propositions $p$ and $q$. A positive ProbLog clause $\pi :: \effect(RC) \leftarrow  \causes (RC)$ lies in the language defined by our bias if we declared $modeh(*,\effect(RC))$, $modeb \left(*, c \right)$ for all $c$ in $\causes (RC)$ and $determination \left( \effect (RC) , c \right)$ for all $c$ in $\causes (RC)$.

Moreover, we can express \textbf{background knowledge} in a logic program defining further propositions in terms of the given data. This is also the way how one can learn clauses with negation by adding a clause~${neg\_p \leftarrow \neg p}$ to the background knowledge for every proposition $p$. We call the pair of a language bias and a background knowledge the \textbf{setting} for a structure learning algorithm. 

\begin{example}
To learn the program $\textbf{P}_1$ of the introduction we nee to specify the setting $\mathfrak{S}_1$ which consists of the bias 
$ modeh(*,recovery)$, $modeb(*,treatment)$, $determinantion(recovery,treatment)$
and an empty background knowledge. If we want to consider the program $\textbf{P}_2$ as well, we additionally need the declarations $modeb(*,neg\_treatment)$ and $determinantion(recovery,neg\_treatment)$ together with the the clause ${neg\_treatment \leftarrow \neg treatment}$ in the background knowledge resulting in the setting~$\mathfrak{S}_2$.
\label{example - setting}
\end{example}

Note that via the $determination/2$ predicate the language bias essentially provides the class dependency graph, i.e.~the corresponding cause-effect relationships, of our program as prior knowledge to the structure learning algorithm. 

\section{Results}
% Prove that positive programs are uniquely determined by their induced distribution 
Generally, in structure learning, from some prior knowledge encoded by a setting one wants to derive a program that describes a given set of data. In most of the cases, the data consists of observations. We additionally assume that our data consists~of samples drawn from the distribution induced by a hidden ProbLog program $\tilde{\textbf{P}}$ of interest and the prior knowledge consists of a language bias, i.e.~of the class dependency graph of $\tilde{\textbf{P}}$. Further, to decide how good a candidate program~$\textbf{P}$ represents our dataset we process statistical tests. However, statistical tests only measure how well the induced distribution of the program $\textbf{P}$ fits a given set of observations. They generally reveal no information about the causal mechanism generating our data. Hence, we cannot measure whether the causal mechanism represented by a candidate ProbLog program~$\textbf{P}$ coincides with the causal mechanism underlying our data, i.e.~with the causal mechanism described by $\tilde{\textbf{P}}$.
 
\begin{example}
Consider the programs $\textbf{P}_{1/2}$ of the introduction. While they both represent different causal models yielding to different counterfactual estimations, they yield the same distribution semantics and share the same class dependency graph.

Hence, if we take $\tilde{\textbf{P}} := \textbf{P}_{1/2}$ for the hidden program, we sample from the same distribution in both cases. That means that even with the correct language bias  a structure learning algorithm cannot determine which of the two programs actually generated the provided data unless it is given further knowledge. 
\label{example - statistical tests}
\end{example}

More drastically, Example \ref{example - statistical tests} illustrates that without further prior knowledge, even under the assumption of perfect learning, it is only possible to learn a program $\textbf{P}$ which is ensured to represent the correct distribution. In particular, we expect that a learned program $\textbf{P}$ does not necessarily answer counterfactual queries correctly. ´

In the following, we study the fragment of acyclic proper positive ProbLog programs in normal form. A ProbLog program $\textbf{P}$ is \textbf{proper in normal form} if every clause $RC \in \textbf{P}$ has a probability ${0 < \pi (RC) < 1}$, if any two distinct clauses $RC_{1/2} \in \textbf{P}$ have distinct causes ${\causes(RC_1) \neq \causes (RC_2)}$ or distinct effects ${\effect (RC_1) \neq \effect (RC_2)}$ and if every sink $s$ in the class dependency graph gives rise to a random fact~$\alpha :: s$.
The main result of this contribution now states that all programs lying in this fragment are uniquely determined by their class dependency graph and their underlying distribution. 

\begin{theorem}
Every acyclic proper positive ProbLog program in normal form $\textbf{P}$ can be reconstructed from its class dependency graph $\graph(\textbf{P})$ and the induced distribution $\pi$. 
\label{theorem - our fragment}
\end{theorem}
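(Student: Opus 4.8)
The plan is to reconstruct the program by working through the class dependency graph $\graph(\textbf{P})$ in topological order, using the induced distribution $\pi$ to recover, for each internal proposition $p$, the set of ProbLog clauses $RC \in \textbf{P}$ with $\effect(RC) = p$ together with their probabilities $\pi(RC)$. The acyclicity of $\graph(\textbf{P})$ guarantees that a topological order exists, and since we are told $\graph(\textbf{P})$ as an input, we know for each $p$ exactly which propositions $q$ may appear in the causes of a clause with effect $p$ — namely the parents $\Pa(p)$ of $p$ in the graph. Because the program is positive, each cause is a positive literal, so the body of a clause with effect $p$ is simply some subset of $\Pa(p)$. The normal-form condition that distinct clauses have distinct cause-effect pairs then means the clauses with effect $p$ are indexed by a set of subsets $S \subseteq \Pa(p)$, and our task reduces to identifying, for each such $S$, whether the clause $\pi_S :: p \leftarrow S$ is present and, if so, recovering $\pi_S \in (0,1)$.

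**First I would** handle the sinks: the normal-form condition guarantees every sink $s$ carries a random fact $\alpha :: s$, i.e.~a clause with empty body, and $\alpha = \pi(s)$ is read off directly from the marginal. **Then I would** proceed inductively up the topological order. Fix $p$ and assume all ancestors have been reconstructed, so in particular the joint distribution of $\Pa(p)$ is known. By the FCM-semantics, $p$ is true exactly when at least one of its clauses \emph{fires}, where the clause $\pi_S :: p \leftarrow S$ fires iff all literals in $S$ hold and its private random fact $u(RC)$ (true with probability $\pi_S$, independently) is true. Conditioning on a complete assignment $\mathbf{v}$ to $\Pa(p)$, the clauses whose bodies are satisfied are exactly those indexed by subsets $S \subseteq \{q : v_q = \text{True}\}$, and since the random facts are mutually independent,
\begin{equation*}
\pi(\neg p \mid \Pa(p) = \mathbf{v}) = \prod_{S \subseteq \{q \,:\, v_q = \text{True}\},\ (\pi_S :: p \leftarrow S) \in \textbf{P}} (1 - \pi_S).
\end{equation*}
This is the key identity: each conditional probability $\pi(p \mid \Pa(p) = \mathbf{v})$, which is computable from $\pi$ once the ancestors are known, equals one minus a product of the factors $(1-\pi_S)$ over the clauses active under $\mathbf{v}$.

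**The crux of the argument** is to invert this system of equations to recover the individual $\pi_S$ and to decide which clauses are present. I would order the assignments $\mathbf{v}$ by the inclusion order on their true-sets and peel off the factors one subset at a time: starting from $\mathbf{v} = \text{all False}$ (which isolates the empty-body clause, if any) and moving to larger true-sets, the factor $(1-\pi_S)$ for a new maximal $S$ appears in exactly one new conditional probability and can be divided out using the already-determined factors for proper subsets. This triangular structure shows the map from programs to distributions is injective on the fragment. **The main obstacle** I anticipate is the decision of clause \emph{presence} versus absence: a factor $(1-\pi_S) = 1$ is indistinguishable from the clause being absent, but this is precisely where the properness condition $0 < \pi(RC) < 1$ earns its keep — it forces every present factor to lie strictly in $(0,1)$, so a recovered value of $1$ unambiguously signals absence and any value in $(0,1)$ signals a present clause with that probability. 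I would also need to verify that the conditioning events $\{\Pa(p) = \mathbf{v}\}$ all have positive probability so the conditionals are well-defined; this should follow from properness propagated up the induction, since every ancestor takes each truth value with positive probability. Assembling these recovered clauses over all $p$ yields a program with the prescribed graph and distribution, and the triangular inversion shows it is the unique such program, completing the reconstruction.
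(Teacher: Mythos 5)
Your proposal follows essentially the same route as the paper's proof: an induction along the class dependency graph (the paper peels off sinks one at a time, you ascend a topological order), the key identity that conditioning on a full parent configuration reduces $\pi(p \mid \Pa(p)=\mathbf{v})$ to a function of the independent clause probabilities alone (your product form $\prod_S (1-\pi_S)$ is just the complement of the paper's inclusion--exclusion expansion of $\Ind_h^{\textbf{P}}$), and a triangular inversion over the subset lattice of $\Pa(p)$ in which properness $0<\pi(RC)<1$ is what separates a present clause from an absent one. The one step you explicitly flag as needing verification --- that every joint parent configuration has positive probability so the conditionals are well-defined (marginal non-degeneracy of each ancestor alone would not suffice for this) --- is likewise taken for granted in the paper, so your argument is correct to the same standard as the published proof.
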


\begin{proof}
We proceed by induction on the number $n$ of nodes in the class dependency graph $\graph(\textbf{P})$.
\begin{enumerate}
\item[$n=1$:]
In this case, the program $\textbf{P}$ consists only of one clause $\pi :: p \leftarrow$. Hence, we set $\pi := \pi (p)$ and we are done.
\item[$n > 1$:]
Choose a sink $h \in \mathfrak{P}$ of $\graph(\textbf{P})$. Further, denote by $\textbf{P} \setminus h$ the program that results from $\textbf{P}$ if we erase all clauses with effect $h$. By maximality $h$ does not occur in the causes of any other clause, i.e.~$\textbf{P} \setminus h$ induces the same distribution on $\mathfrak{I}(\mathfrak{P}) \setminus \{ h \}$ as the program $\textbf{P}$ and it has the graph $\graph(\textbf{P}) \setminus h $ as its class dependency graph. Here, $\graph(\textbf{P}) \setminus  h $ denotes the graph that results from $\graph(\textbf{P})$ if we erase the node $h$ together with all edges pointing into it. Now, by the induction hypothesis we can reconstruct the program $\textbf{P} \setminus h$ from the given data.

Hence, we are left to reconstruct the clauses defining $h$ itself. Note that the parents $b \in \pa(h)$  of $h$ in $\graph(\textbf{P})$ are the only propositions that may occur in the body of a clause defining $h$. Further, note that each of these occurrences is positive. We consider the function
\begin{align*}
\Ind_h^{\textbf{P}} : & \mathcal{P}(\pa (h)) \rightarrow [0,1] 
& T \mapsto \pi (h \vert \{ t, \neg s \text{ : } t \in T,~s \in \pa(h) \setminus T \}),
\end{align*} 
where $\mathcal{P}(\_)$ denotes the power set operator. Recall from Pearl \cite[§3]{Causality} that observing the parents of $h$ is the same as intervening on them. Hence, we see that 
\begin{equation}
Ind_h^{\textbf{P}} (T) := 
\pi \left( \bigvee_{ \substack{RC \in \textbf{P} \\ \causes (RC) \subseteq T \\ \effect (RC) = h} } u(RC) \right)
= 
\sum_{ \substack{RC_1,...,RC_k \in \textbf{P} \\ k \in \mathbb{N},~ \causes (RC) \subseteq T
 \\ \effect (RC) = h}} (-1)^k \prod_{i=1}^k \pi(RC_i)
\label{equation - reconstruct parameters}
\end{equation} 
Now, it is easy to see that $T \subseteq \pa (\textbf{P})$ are the causes of a clause in $\textbf{P}$ if and only if ${\Ind_h^{\textbf{P}} (S) < \Ind_h^{\textbf{P}} (T)}$ for all $S \subseteq \pa (h)$ with $S \subsetneq T$. Further, we obtain the parameters of $\textbf{P}$ by recursion: We find that $\pi (RC) = \Ind_h^{\textbf{P}} (\body (RC))$ for every clause $RC \in \textbf{P}$ with a minimal body. Further, in the recursion step Equation (\ref{equation - reconstruct parameters}) yields a one-dimensional linear equation for the parameter of interest. 
\end{enumerate}
\end{proof}

In a forthcoming paper, we prove that the answers to all counterfactual queries uniquely determine a proper ProbLog program in normal form. Thus if we want to answer counterfactual queries based on a learned program $\textbf{P}$, we almost need to fully reconstruct the hidden program $\tilde{\textbf{P}}$.     

\begin{remark}
If we estimate the functions $Ind_h^{\textbf{P}}$ using relative frequencies, we obtain a structure learning algorithm that recovers an acyclic proper positive ProbLog program in normal form from a known causal structure and a sufficiently large set of samples. The resulting distributions (one for every counterfactual query) are guaranteed to converge in probability. Further, the complexity is exponential in the maximal number of parents of a node in the class dependency graph and linear in the size of the alphabet $\mathfrak{P}$.     
\end{remark}

Finally, assume we apply a structure learning algorithm \cite[§10]{PLP} with a language bias encoding the class dependency graph $\Graph(\tilde{\textbf{P}})$ to obtain a program~$\textbf{P}$. Let us further assume that we learned perfectly, i.e.~that the program~$\textbf{P}$ encodes the same distribution as $\tilde{\textbf{P}}$. If we now assume that both programs $\textbf{P}$ and~$\tilde{\textbf{P}}$ are acyclic proper positive ProbLog programs in normal form, Theorem \ref{theorem - our fragment} yields that $\textbf{P}$ and $\tilde{\textbf{P}}$ coincide, i.e.~$\textbf{P}$ expresses the full causal content of $\tilde{\textbf{P}}$. 

Since without background knowledge each currently available structure learning algorithm \cite[§10]{PLP} only searches for positive programs fitting a given dataset, the assumption to learn proper positive ProbLog programs in normal form is realized easily. Causally, the absence of background knowledge implies that we assume our data to be generated by a proper positive ProbLog program in normal form.    

\begin{corollary}
Assume we are given data sampled from a hidden proper positive ProbLog program in normal form $\tilde{\textbf{P}}$ and assume we are aware of the class dependency graph $\Graph(\tilde{\textbf{P}})$ of $\tilde{\textbf{P}}$. Every structure learning algorithm that is able to learn a  proper positive ProbLog program in normal form with the correct class dependency graph and the correct distribution reconstructs $\tilde{\textbf{P}}$ from the provided data. In particular, the result of such a structure learning algorithm supports counterfactual reasoning.~$\square$
\end{corollary}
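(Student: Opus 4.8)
The plan is to obtain the Corollary as a direct consequence of Theorem~\ref{theorem - our fragment}, read as an injectivity statement: the assignment sending an acyclic proper positive ProbLog program in normal form to the pair $(\graph(\textbf{P}), \pi)$ consisting of its class dependency graph and its induced distribution is injective. Indeed, the reconstruction procedure exhibited in the proof of Theorem~\ref{theorem - our fragment} depends only on $\graph(\textbf{P})$ and $\pi$; hence any two programs in the fragment that agree on both invariants are sent to the same output and must coincide. The heart of the argument is therefore to check that both the hidden program $\tilde{\textbf{P}}$ and the learned program $\textbf{P}$ lie in the fragment of Theorem~\ref{theorem - our fragment} and share these two invariants.

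First I would settle acyclicity, which is the only hypothesis of Theorem~\ref{theorem - our fragment} not repeated verbatim in the Corollary. Since $\tilde{\textbf{P}}$ induces a distribution via the FCM-semantics, its underlying logic program is acyclic, so $\graph(\tilde{\textbf{P}})$ is a directed acyclic graph. By assumption the algorithm returns a program $\textbf{P}$ with $\graph(\textbf{P}) = \graph(\tilde{\textbf{P}})$; as this common graph is a DAG, $\textbf{P}$ is acyclic as well. Thus both $\textbf{P}$ and $\tilde{\textbf{P}}$ are acyclic proper positive ProbLog programs in normal form, and I would then collect the shared invariants: the algorithm learns the correct class dependency graph, i.e.~$\graph(\textbf{P}) = \graph(\tilde{\textbf{P}})$, and the correct distribution, i.e.~the distribution $\pi$ induced by $\textbf{P}$ equals the distribution of the data, which is the one induced by $\tilde{\textbf{P}}$. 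Feeding the common pair $(\graph(\tilde{\textbf{P}}), \pi)$ into the reconstruction of Theorem~\ref{theorem - our fragment} then yields the same program whether we start from $\textbf{P}$ or from $\tilde{\textbf{P}}$, so $\textbf{P} = \tilde{\textbf{P}}$.

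For the counterfactual claim I would unfold the counterfactual semantics recalled in Section~\ref{section - preliminaries}. A counterfactual answer $\pi(\phi \mid \textbf{e}, \Do(\textbf{x}))$ is obtained by querying the program $\textbf{P}^{K, \Do(\textbf{x})}$, whose construction --- forming the two copies $\textbf{L}(\textbf{P})^{e}$ and $\textbf{L}(\textbf{P})^{i}$, adjoining $\Facts(\textbf{P})$, and intervening on $\textbf{X}^{i}$ --- is a deterministic function of the program alone. Since $\textbf{P} = \tilde{\textbf{P}}$, we get $\textbf{P}^{K, \Do(\textbf{x})} = \tilde{\textbf{P}}^{K, \Do(\textbf{x})}$, so every counterfactual query answered from the learned program agrees with the answer dictated by the true data-generating program $\tilde{\textbf{P}}$; this is precisely the sense in which the learned program supports counterfactual reasoning.

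I expect the only genuinely delicate points to be the passage from the existential phrasing ``can be reconstructed'' in Theorem~\ref{theorem - our fragment} to the uniqueness used here, together with the acyclicity inheritance. Both are minor: the reconstruction procedure is a well-defined function of the graph and the distribution, which upgrades Theorem~\ref{theorem - our fragment} to injectivity, and acyclicity is a property of the shared class dependency graph and hence automatically transfers from $\tilde{\textbf{P}}$ to $\textbf{P}$. Everything else is a matter of matching the Corollary's hypotheses to those of Theorem~\ref{theorem - our fragment} and unwinding the definition of the counterfactual semantics.
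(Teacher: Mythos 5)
Your proposal is correct and follows essentially the same route as the paper, which likewise obtains the corollary by applying Theorem~\ref{theorem - our fragment} to conclude that the learned program and the hidden program coincide, whence counterfactual queries agree. Your additional care about acyclicity being inherited through the shared class dependency graph and your explicit unwinding of the counterfactual semantics are sensible elaborations of details the paper leaves implicit.
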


\section{Conclusion}
In the introduction, we show that the distribution semantics does not uniquely determine counterfactual query outcomes for ProbLog programs, making it unfeasible to use the currently available structure learning algorithms for counterfactual reasoning. However, our main result reveals that proper positive ProbLog programs in normal form are actually uniquely determined by their distribution semantics and their class dependency graph. Hence, if applied without background knowledge and if we assume perfect learning, the currently available structure learning algorithms can recover these programs when provided with the correct language bias i.e.~they support counterfactual reasoning in this setting. 

In a forthcoming paper, we show that counterfactual reasoning uniquely determines a proper program in normal form, i.e.~it is not sufficient to learn programs under a coarser notion of equivalence. Determining the equivalence classes of ProbLog programs representing the same distributions and predicting the behaviour of the available structure learning algorithm for more general fragments of ProbLog are promising directions for future work extending this contribution.  
   
\nocite{*}
\bibliographystyle{eptcs}
\bibliography{literature.bib}

\begin{thebibliography}{1}
\providecommand{\bibitemdeclare}[2]{}
\providecommand{\surnamestart}{}
\providecommand{\surnameend}{}
\providecommand{\urlprefix}{Available at }
\providecommand{\url}[1]{\texttt{#1}}
\providecommand{\href}[2]{\texttt{#2}}
\providecommand{\urlalt}[2]{\href{#1}{#2}}
\providecommand{\doi}[1]{doi:\urlalt{https://doi.org/#1}{#1}}
\providecommand{\eprint}[1]{arXiv:\urlalt{https://arxiv.org/abs/#1}{#1}}
\providecommand{\bibinfo}[2]{#2}

\bibitemdeclare{inproceedings}{Problog}
\bibitem{Problog}
\bibinfo{author}{Luc \surnamestart De~Raedt\surnameend},
  \bibinfo{author}{Angelika \surnamestart Kimmig\surnameend} \&
  \bibinfo{author}{Hannu \surnamestart Toivonen\surnameend}
  (\bibinfo{year}{2007}): \emph{\bibinfo{title}{Prob{L}og: A Probabilistic
  {P}rolog and Its Application in Link Discovery}}.
\newblock In: {\slshape \bibinfo{booktitle}{20th International Joint Conference
  on Artificial Intelligence}}, \bibinfo{volume}{7}, \bibinfo{publisher}{AAAI
  Press}, \bibinfo{address}{Hyderabad, India}, pp. \bibinfo{pages}{2462--2467},
  \doi{10.5555/1625275.1625673}.

\bibitemdeclare{article}{CounterfactualIntroduction}
\bibitem{CounterfactualIntroduction}
\bibinfo{author}{Nicole~Van \surnamestart Hoeck\surnameend}
  (\bibinfo{year}{2015}): \emph{\bibinfo{title}{Cognitive Neuroscience of Human
  Counterfactual Reasoning}}.
\newblock {\slshape \bibinfo{journal}{Frontiers in Human Neuroscience}}
  \bibinfo{volume}{9}, \doi{10.3389/fnhum.2015.00420}.

\bibitemdeclare{article}{whatif}
\bibitem{whatif}
\bibinfo{author}{Rafael \surnamestart Kiesel\surnameend},
  \bibinfo{author}{Kilian \surnamestart Rückschloß\surnameend} \&
  \bibinfo{author}{Felix \surnamestart Weitkämper\surnameend}
  (\bibinfo{year}{2023}): \emph{\bibinfo{title}{``What if?'' in Probabilistic
  Logic Programming}}.
\newblock {\slshape \bibinfo{journal}{Accepted for TPLP Proceedings of ICLP
  2023}}.
\newblock \urlprefix\url{https://arxiv.org/abs/2305.15318}.

\bibitemdeclare{book}{Causality}
\bibitem{Causality}
\bibinfo{author}{Judea \surnamestart Pearl\surnameend} (\bibinfo{year}{2000}):
  \emph{\bibinfo{title}{Causality}}, \bibinfo{edition}{2} edition.
\newblock \bibinfo{publisher}{Cambridge University Press},
  \bibinfo{address}{Cambridge, UK}, \doi{10.1017/CBO9780511803161}.

\bibitemdeclare{book}{PLP}
\bibitem{PLP}
\bibinfo{author}{Fabrizio \surnamestart {Riguzzi}\surnameend}
  (\bibinfo{year}{2020}): \emph{\bibinfo{title}{Foundations of Probabilistic
  Logic Programming: Languages, Semantics, Inference and Learning}}.
\newblock \bibinfo{publisher}{River Publishers}, \doi{10.1201/9781003338192}.

\bibitemdeclare{inproceedings}{fcm-semantics}
\bibitem{fcm-semantics}
\bibinfo{author}{Kilian \surnamestart R{\"{u}}ckschlo{\ss}\surnameend} \&
  \bibinfo{author}{Felix \surnamestart Weitk{\"{a}}mper\surnameend}
  (\bibinfo{year}{2022}): \emph{\bibinfo{title}{Exploiting the Full Power of
  {P}earl's Causality in Probabilistic Logic Programming}}.
\newblock In: {\slshape \bibinfo{booktitle}{Proceedings of the International
  Conference on Logic Programming 2022 Workshops}}, {\slshape
  \bibinfo{series}{{CEUR} Workshop Proceedings}} \bibinfo{volume}{3193},
  \bibinfo{publisher}{CEUR-WS.org}, \bibinfo{address}{Haifa, Israel}.
\newblock \urlprefix\url{http://ceur-ws.org/Vol-3193/paper1PLP.pdf}.

\bibitemdeclare{inproceedings}{DistributionSemantics}
\bibitem{DistributionSemantics}
\bibinfo{author}{Taisuke \surnamestart Sato\surnameend} (\bibinfo{year}{1995}):
  \emph{\bibinfo{title}{A Statistical Learning Method for Logic Programs with
  Distribution Semantics}}.
\newblock In: {\slshape \bibinfo{booktitle}{Logic Programming: The 12th
  International Conference}}, \bibinfo{publisher}{The MIT Press},
  \bibinfo{address}{Tokyo, Japan}, pp. \bibinfo{pages}{715--729},
  \doi{10.7551/mitpress/4298.003.0069}.

\end{thebibliography}
\end{document}